\newtheorem{thm}{Theorem}[section]
\theoremstyle{definition}
\newtheorem{ex}[thm]{Example}
\newcommand{\bbx}{\mathbf{X}}
\newcommand{\bbz}{\mathbf{Z}}
\newcommand{\ee}{\mathbb{E}}
\newcommand{\mx}{\mathcal{X}}
\newcommand{\my}{\mathcal{Y}}
\newcommand{\pp}{\mathbb{P}}
\newcommand{\rr}{\mathbb{R}}
\newcommand{\var}{\mathrm{Var}}
\numberwithin{equation}{section}
\newcommand{\R}{\mathbb{R}}
\renewcommand{\hat}{\widehat}
\newcommand{\mz}{\mathcal{Z}}
\begin{document}
\title{A survey of some recent developments in measures of association}
\author{Sourav Chatterjee\thanks{Mailing address: Department of Statistics, Stanford University, 390 Jane Stanford Way, Stanford, CA 94305, USA. Email: \href{mailto:souravc@stanford.edu}{\tt souravc@stanford.edu}. The author was partially supported by NSF grants DMS-2113242 and DMS-2153654. The author thanks Nabarun Deb, Fang Han  and Bodhisattva Sen for helpful comments on a preliminary draft.
}}
\affil{Stanford University}

\maketitle

\begin{center}
{\it \small In honor of friend and teacher Prof.~Rajeeva L.~Karandikar on the occasion of his 65$^{th}$ birthday.}
\end{center}

\begin{abstract}
This paper surveys some recent developments in measures of association related to a new coefficient of correlation introduced by the author. A straightforward extension of this coefficient to standard Borel spaces (which includes all Polish spaces), overlooked in the literature so far, is proposed at the end of the survey.  
\newline
\newline
\noindent {\scriptsize {\it Key words and phrases.} Correlation, dependence, measures of association, standard Borel space}
\newline
\noindent {\scriptsize {\it 2020 Mathematics Subject Classification.} 62H20, 62H15.}
\end{abstract}

\section{Introduction}
Measuring associations between variables is one of the central goals of data analysis.  Arguably, the three most popular classical measures of association are Pearson's correlation coefficient, Spearman's $\rho$, and Kendall's $\tau$. Although these coefficients are powerful for detecting monotonic associations, a practical problem is that  they are not effective for detecting associations that are not monotonic. There have been many proposals to address this deficiency of the classical coefficients~\cite{josseholmes16}, such as the maximal correlation coefficient~\cite{hirschfeld35, gebelein41, renyi59, breimanfriedman85}, various coefficients based on joint cumulative distribution functions and ranks~\cite{gamboaetal18, drtonetal20, hanetal17, bergsmadassios14, nandyetal16, weihsetal16, weihsetal18, yanagimoto70, csorgo85, purisen71, hoeffding48, blumetal61, romano88, rosenblatt75, debsen21, wangetal17}, kernel-based methods~\cite{pfisteretal18, grettonetal05, grettonetal07, sensen14, zhangetal18}, information theoretic coefficients~\cite{kraskovetal04, linfoot57, reshefetal11}, coefficients based on copulas~\cite{detteetal13, lopez-pazetal13, sklar59, schweizerwolff81, zhang19}, and coefficients based on  pairwise distances~\cite{szekelyetal07, szekelyrizzo09, helleretal13, friedmanrafsky83, lyons13}.  

This survey is about some recent developments in this area, beginning with a new coefficient of correlation proposed by the author in the paper~\cite{chatterjee21a}. This coefficient has the following desirable features: (a) It has a simple expression, like the classical coefficients. (b) It is a consistent estimator of a measure of dependence which is $0$ if and only if the variables are independent and  $1$ if and only if one is a measurable function of the other. (c) It has a simple asymptotic theory under the hypothesis of independence.

The new coefficient is defined as follows. Let $(X,Y)$ be a pair of random variables defined on the same probability space, where $Y$ is not  a constant.  Let $(X_1,Y_1),\ldots,(X_n,Y_n)$ be i.i.d.~pairs of random variables with the same law as $(X,Y)$, where $n\ge 2$. Rearrange the data as $(X_{(1)},Y_{(1)}),\ldots,(X_{(n)}, Y_{(n)})$ such that $X_{(1)}\le \cdots \le X_{(n)}$. (Note that $Y_{(i)}$ is just the $Y$-value `paired with' $X_{(i)}$ in the original data, and not the $i^{th}$ order statistic of the $Y$'s.)  If there are ties among the $X_i$'s, then choose an increasing rearrangement as above by breaking ties uniformly at random. Let $r_i$ be the number of $j$ such that $Y_{(j)}\le Y_{(i)}$, and let $l_i$ to be the number of $j$ such that $Y_{(j)}\ge Y_{(i)}$. Then define
\begin{align}\label{xindef}
\xi_n(X,Y) := 1-\frac{n\sum_{i=1}^{n-1}|r_{i+1}-r_i|}{2\sum_{i=1}^n l_i(n-l_i)}. 
\end{align}
This is the correlation coefficient proposed in \cite{chatterjee21a}. When there are no ties among the $Y_i$'s, $l_1,\ldots,l_n$ is just a permutation of $1,\ldots,n$, and so the denominator in the above expression is just $n(n^2-1)/3$. The following theorem is the main consistency result for $\xi_n$. 


\begin{thm}[\cite{chatterjee21a}]\label{mainthm}
If $Y$ is not almost surely a constant, then as $n\to\infty$, $\xi_n(X,Y)$ converges almost surely to the deterministic limit
\begin{equation}\label{xidef}
\xi(X,Y) := \frac{\int \var(\ee(1_{\{Y\ge t\}}|X)) d\mu(t)}{\int\var(1_{\{Y\ge t\}}) d\mu(t)},
\end{equation}
where $\mu$ is the law of $Y$. This limit belongs to the interval $[0,1]$. It is  $0$ if and only if $X$ and $Y$ are independent, and it is $1$ if and only if there is a measurable function $f:\rr\to\rr$ such that $Y=f(X)$ almost surely. 
\end{thm}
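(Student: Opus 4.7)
\medskip

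\textbf{Proof plan.} My overall strategy is to first identify the almost sure limit of $\xi_n(X,Y)$ by reinterpreting the numerator as an integral over thresholds, and then to deduce the characterizations of $\xi=0$ and $\xi=1$ from properties of conditional distributions. For simplicity I first assume no ties (if necessary, invoke a short approximation argument using the random tie-breaking and the fact that with positive probability real-valued data has no ties in the limit).

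The first step is the identity
\[
|r_{i+1}-r_i| \;=\; \sum_{j=1}^n \bigl|\mathbf{1}_{Y_{(j)}\le Y_{(i+1)}}-\mathbf{1}_{Y_{(j)}\le Y_{(i)}}\bigr| \;=\; n\int \bigl|\mathbf{1}_{Y_{(i+1)}\le t}-\mathbf{1}_{Y_{(i)}\le t}\bigr|\,d\mu_n(t),
\]
where $\mu_n$ is the empirical distribution of the $Y$'s. Swapping sums and passing from $\mu_n$ to $\mu$ using Glivenko--Cantelli, this reduces the problem to proving that for $\mu$-a.e.~fixed $t$,
\[
T_n(t) \;:=\; \frac{1}{n}\sum_{i=1}^{n-1}\bigl|\mathbf{1}_{Y_{(i+1)}\le t}-\mathbf{1}_{Y_{(i)}\le t}\bigr| \;\xrightarrow{\text{a.s.}}\; 2\,\ee\bigl[F_X(t)(1-F_X(t))\bigr],
\]
where $F_x(t):=\pp(Y\le t\mid X=x)$. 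Heuristically, consecutive pairs $(X_{(i)},X_{(i+1)})$ have spacing $O(1/n)$, so given $X_{(i)}\approx X_{(i+1)}\approx x$ the two $Y$'s are conditionally independent with law $\mu_x$, and the chance that exactly one lies below $t$ is $2F_x(t)(1-F_x(t))$. Integrating and simplifying via total variance yields $\xi_n\to 6\int \var(F_X(t))\,d\mu(t)$, which coincides with \eqref{xidef} because $\int \var(1_{\{Y\ge t\}})\,d\mu(t)=\int F(t)(1-F(t))\,dF(t)=1/6$.

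The main obstacle is the rigorous proof of the convergence of $T_n(t)$: the summands are strongly dependent because reordering by $X$ couples the indices globally. I would handle this in two stages. For the mean, use the theory of induced (concomitant) order statistics: conditionally on the $X$-values, $Y_{(i)}$ is independent across $i$ with law $\mu_{X_{(i)}}$, so $\ee T_n(t)$ is an explicit double integral that converges to the claimed limit by dominated convergence once one shows $X_{(i+1)}-X_{(i)}\to 0$ in a suitable averaged sense (elementary, from $\ee[X_{(n)}-X_{(1)}]$ being bounded when $X$ has moments, and by truncation otherwise). For concentration, observe that changing a single pair $(X_k,Y_k)$ alters $T_n(t)$ by $O(1/n)$, so McDiarmid's bounded-difference inequality gives exponentially small deviations from the mean, and Borel--Cantelli upgrades this to almost sure convergence. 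The remaining bookkeeping is to take a countable dense set of $t$'s, argue monotonicity in $t$ to promote pointwise to integrated convergence, and handle the ratio with the denominator (which behaves like $n(n^2-1)/3$ with negligible fluctuation).

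Once the limit $\xi=6\int\var(F_X(t))\,d\mu(t)$ is established, the remaining qualitative claims are purely measure-theoretic. The bound $0\le \xi\le 1$ is immediate from $\var(\ee[Z\mid X])\le \var(Z)$. Next, $\xi=0$ iff $\var(\ee[\mathbf{1}_{Y\ge t}\mid X])=0$ for $\mu$-a.e.~$t$, i.e., $F_X(t)=F(t)$ a.s.\ for a dense set of $t$; by right-continuity in $t$ and a monotone-class argument this forces the conditional law of $Y$ given $X$ to equal $\mu$ a.s., which is exactly independence of $X$ and $Y$. Finally, $\xi=1$ iff $\ee[\var(\mathbf{1}_{Y\ge t}\mid X)]=0$ for $\mu$-a.e.~$t$, i.e., $F_X(t)\in\{0,1\}$ a.s.\ for a dense set of $t$; this means the conditional distribution of $Y$ given $X$ is a.s.\ a point mass, so there is a measurable $f$ with $Y=f(X)$ a.s. The converse is immediate since $\mathbf{1}_{f(X)\ge t}$ is $\sigma(X)$-measurable.
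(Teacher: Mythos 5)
Your plan matches the paper's own sketch essentially step for step: rewrite $|r_{i+1}-r_i|$ as an integral of indicator differences over thresholds $t$, use that the two $Y$-values at consecutive $X$-ranks are approximately conditionally i.i.d.\ given the $X$'s (concomitants of order statistics), apply the law of total variance to arrive at $6\int\var(\ee(1_{\{Y\ge t\}}\mid X))\,d\mu(t)$, and then upgrade convergence in mean to almost sure convergence via a bounded-differences concentration inequality plus Borel--Cantelli; the paper likewise defers the technical details (ties in $Y$, passing from spacings $\to 0$ to convergence of the conditional law without any continuity of $x\mapsto F_x(t)$ via a Lebesgue-density-type argument) to the supplement of \cite{chatterjee21a}. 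One caveat worth flagging: your remark that tie-breaking can be handled because ``with positive probability real-valued data has no ties in the limit'' is not right when $Y$ has atoms --- ties then occur with probability one and the denominator genuinely differs from $n(n^2-1)/3$ --- so the discrete case needs the full treatment in \cite{chatterjee21a} rather than a reduction to the continuous case.
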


 The limiting value $\xi(X,Y)$ appeared in the literature prior to \cite{chatterjee21a}, in a  paper of~\citet*{detteetal13} (see also~\cite{gamboaetal18, kongetal19}). The paper \cite{detteetal13} gave a copula-based estimator for $\xi(X,Y)$ when $X$ and $Y$ are continuous, that is consistent under smoothness assumptions on the copula and is computable in time $n^{5/3}$ for an optimal choice of tuning parameters. 
 
Note that neither $\xi_n$ nor the limiting value $\xi$ is symmetric in $X$ and $Y$. A symmetrized version of $\xi_n$ can be constructed by taking the maximum of $\xi_n(X,Y)$ and $\xi_n(Y,X)$.

\section{Why does it work?}
The complete proof of Theorem \ref{mainthm} is available in the supplementary materials of \cite{chatterjee21a}, and also in the arXiv version of the paper. It is not too hard to see why $\xi(X,Y)$ has the properties listed in Theorem \ref{mainthm}, although filling in the details takes some work. The proof of convergence of $\xi_n(X,Y)$ to $\xi(X,Y)$ is less obvious. The following is a very rough sketch of the proof, reproduced from a similar discussion in \cite{chatterjee21a}. 

For simplicity, consider only the case of continuous $X$ and $Y$, where the denominator in~\eqref{xidef} is simply $n(n^2-1)/3$.  First, note that by the Glivenko--Cantelli theorem, $r_i/n \approx F(Y_{(i)})$, where $F$ is the cumulative distribution function of $Y$. Thus, 
\begin{align}\label{xiapprox}
\xi_n(X,Y) \approx 1 - \frac{3}{n} \sum_{i=1}^n |F(Y_i)- F(Y_{N(i)})|,
\end{align}
where $N(i)$ is the unique index $j$ such that $X_j$ is immediately to the right of $X_i$ if we arrange the $X$'s in increasing order. If $X_i$ is the rightmost value, define $N(i)$ arbitrarily; it does not matter since the contribution of a single term in the above sum is of order $1/n$. Next,  observe that for any $x,y\in \rr$, 
\begin{align}\label{fform}
|F(x)- F(y)| &= \int (1_{\{t\le x\}} - 1_{\{t\le y\}})^2 d\mu(t),
\end{align}
where $\mu$ is the law of $Y$. This is true because the integrand is $1$ between $x$ and $y$ and $0$ outside.

Suppose that we condition on $X_1,\ldots,X_n$. Since $X_i$ is likely to be very close to $X_{N(i)}$, the random variables $Y_i$ and $Y_{N(i)}$ are likely to be approximately i.i.d.~after this conditioning. This leads to the approximation
\begin{align*}
\ee[(1_{\{t\le Y_i\}} - 1_{\{t\le Y_{N(i)}\}})^2|X_1,\ldots,X_n] &\approx 2\var(1_{\{t\le Y_i\}}|X_1,\ldots,X_n) \\
&= 2\var(1_{\{t\le Y_i\}}|X_i). 
\end{align*}
This gives
\begin{align*}
\ee(1_{\{t\le Y_i\}} - 1_{\{t\le Y_{N(i)}\}})^2 &\approx  2\ee[\var(1_{\{t\le Y\}}|X)]\\
&= 2\var(1_{\{t\le Y\}}) - 2 \var(\ee(1_{\{t\le Y\}}|X)). 
\end{align*}
Combining this with \eqref{fform}, we get
\begin{align*}
\ee|F(Y_i)-F(Y_{N(i)})| &\approx \int 2[\var(1_{\{t\le Y\}}) -\var(\ee(1_{\{t\le Y\}}|X))]d\mu(t). 
\end{align*}
But note that $\var(1_{\{t\le Y\}}) = F(t)(1-F(t))$, and $F(Y)\sim \textup{Uniform}[0,1]$. Thus, 
\[
\int \var(1_{\{t\le Y\}}) d\mu(t) = \int F(t)(1-F(t)) d\mu(t) = \int_0^1 x(1-x)dx = \frac{1}{6}.
\] 
Therefore by \eqref{xiapprox},
\begin{align*}
\ee(\xi_n(X,Y)) \approx 6 \int \var(\ee(1_{\{t\le Y\}}|X))d\mu(t) = \xi(X,Y), 
\end{align*}
where the last identity holds because $\int \var(1_{\{t\le Y\}}) d\mu(t)  = 1/6$, as shown above. This establishes the convergence of $\ee(\xi_n(X,Y))$ to $\xi(X,Y)$. Concentration inequalities are then used to show that $\xi_n(X,Y)-\ee(\xi_n(X,Y))\to 0$ almost surely.

\section{Asymptotic distribution}\label{asympsec}
Let $X$, $Y$ and $\xi_n$ be as in the previous section. For each $t\in \rr$, let $F(t):= \pp(Y\le t)$ and $G(t):=\pp(Y\ge t)$. Let $\phi(y,y') := \min\{F(y), F(y')\}$. Define 
\begin{equation}\label{tauform}
\tau^2 = \frac{\ee \phi(Y_1,Y_2)^2 - 2\ee(\phi(Y_1,Y_2)\phi(Y_1,Y_3)) + (\ee\phi(Y_1,Y_2))^2}{(\ee G(Y)(1-G(Y)))^2},
\end{equation}
where $Y_1,Y_2,Y_3$ are independent copies of $Y$. The following theorem gives the limiting distribution of $\xi_n$ under the null hypothesis that $X$ and $Y$ are independent.
\begin{thm}[\cite{chatterjee21a}]\label{cltthm}
Suppose that $X$ and $Y$ are independent. Then   $\sqrt{n}\xi_n(X,Y)$ converges to $N(0, \tau^2)$ in distribution as $n\to\infty$, where $\tau^2$ is given by the formula~\eqref{tauform} stated above. The number $\tau^2$ is strictly positive if $Y$ is not a constant, and equals $2/5$ if $Y$ is continuous. 
\end{thm}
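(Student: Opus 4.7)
My strategy is to reduce everything to the analysis of $M := \sum_{i=1}^{n-1}\min(r_i,r_{i+1})$. Using $|a-b|=a+b-2\min(a,b)$ together with $\sum_{i=1}^{n-1}(r_i+r_{i+1})=n(n+1)-r_1-r_n$, one rewrites
\[
\sqrt{n}\,\xi_n = \frac{6\,(M-\E M)}{n^{3/2}} + o_P(1).
\]
Under independence, if $Y$ is continuous, $\xi_n$ depends only on the rank vector $(r_1,\ldots,r_n)$, which is a uniform random permutation of $[n]$; hence $\xi_n$ is distribution-free in that case and it suffices to analyze $Y_1,\ldots,Y_n$ iid $\mathrm{Uniform}[0,1]$, with the general / tied law handled by tracking the normalization $\E[G(Y)(1-G(Y))]$ through the same decomposition. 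Writing $\min(r_i,r_{i+1})=nF_n(Z_i)$ with $Z_i=\min(Y_i,Y_{i+1})$ and splitting $F_n=F+(F_n-F)$,
\[
M = n\sum_{i=1}^{n-1}\phi(Y_i,Y_{i+1}) + n\sum_{i=1}^{n-1}(F_n-F)(Z_i),\qquad \phi(y,y') := \min(F(y),F(y')).
\]

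Next I apply the Hoeffding decomposition of the pair kernel: $\phi(Y_1,Y_2) = \mu + g(Y_1) + g(Y_2) + h(Y_1,Y_2)$ with $\mu=\E\phi$, $g(y) = \E[\phi(y,Y)]-\mu$, and $h$ degenerate in the sense $\E[h(Y_1,Y_2)\mid Y_1]=0$. For the empirical correction I expand $(F_n-F)(Z_i) = \tfrac{1}{n}\sum_j[\mathbf{1}(Y_j\le Z_i) - F(Z_i)]$ and split according to whether $j\in\{i,i+1\}$ or not. The on-diagonal piece ($j\in\{i,i+1\}$) collapses to $1-2\phi(Y_i,Y_{i+1})$, producing a second copy of $\sum\phi$; the off-diagonal piece is a degenerate triple sum whose single-variable Hoeffding marginal in $Y_j$ is $f_1(y) = (1-F(y))^2 - (1-2\mu)$. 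The crux of the argument is a short polynomial-identity computation showing that the $g$-coefficients generated by $n\sum\phi$ and the $f_1$-coefficients generated by the off-diagonal triple-sum cancel at order $n$ --- in the Uniform case this reads $2n\,g(y) + (n-3)\,f_1(y) \equiv 6y - 3y^2 - 2$, with no $n$-dependence. After this cancellation,
\[
M - \E M = (n-2)\sum_{i=1}^{n-1} h(Y_i, Y_{i+1}) + \mathrm{remainder},
\]
where the remainder gathers the $O(1)$ single-variable residues summed over $O(n)$ indices, boundary-index corrections, and the two- and three-variable Hoeffding components of the empirical correction --- each of variance $O(n^2)$ or less, hence $o_P(n^{3/2})$ after the overall $n^{-3/2}$ rescaling.

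The surviving main term $\sum h(Y_i, Y_{i+1})$ is a stationary, $1$-dependent, mean-zero, fully degenerate sequence, so the Hoeffding--Robbins CLT for $1$-dependent sums applies; degeneracy kills the adjacent covariance since $\E[h(Y_1,Y_2)h(Y_2,Y_3)] = \E\bigl[\E[h(Y_1,Y_2)\mid Y_2]\,\E[h(Y_2,Y_3)\mid Y_2]\bigr] = 0$, leaving asymptotic variance $\Var h$. Combined with the prefactor $6/n^{3/2}$ and the standard Hoeffding orthogonality identity
\[
\Var h = \E\phi(Y_1,Y_2)^2 - 2\E[\phi(Y_1,Y_2)\phi(Y_1,Y_3)] + (\E\phi(Y_1,Y_2))^2,
\]
this yields $\sqrt n\,\xi_n \Rightarrow N\!\bigl(0,\ \Var h / (\E[G(Y)(1-G(Y))])^2\bigr)$, which is exactly \eqref{tauform}. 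In the continuous case $\E[G(Y)(1-G(Y))]=1/6$ and a direct calculation gives $\Var h = 1/90$, so $\tau^2 = 2/5$. Positivity of $\tau^2$ whenever $Y$ is non-constant follows because $h\equiv 0$ would force $\phi$ to be additively separable, which fails whenever $\mu$ is not a point mass.

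The main obstacle is the exact order-$n$ cancellation of single-variable Hoeffding contributions between the two pieces of $M$; without it one inherits a spurious $\Theta(n^3)$ variance term and obtains the wrong asymptotic variance (indeed, a naive iid approximation $r_i/n \approx U_i$ delivers $\tau^2 = 3/5$ rather than $2/5$). Once that clean cancellation is in hand, all subsidiary remainder terms --- boundary corrections, the $O(1)$ single-variable residues, and the higher-order Hoeffding components of the empirical correction --- can be controlled by routine $L^2$ bounds, and the non-continuous / tied case follows the same blueprint with additional bookkeeping for the tie-breaking randomization.
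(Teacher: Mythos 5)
Your route is sound, and it tracks what I take to be the essential mechanism behind the result in \cite{chatterjee21a}: the rank normalization $F_n = F + (F_n - F)$ generates a correction whose first-order (Hajek) projection exactly cancels, at order $n$, the Hajek projection of the kernel $\phi(y,y') = \min(F(y),F(y'))$, so that the limiting fluctuations come entirely from the fully degenerate Hoeffding component $h$ summed over the $1$-dependent pairs $(Y_i,Y_{i+1})$. I checked the key cancellation you assert. With $g(y) = y - y^2/2 - 1/3$ and the third-argument marginal $a_3(y) = (1-y)^2 - 1/3$, the combined first-order projection of $M$ onto a single $Y_k$ (including the $-2\sum\phi$ coming from the on-diagonal $j\in\{i,i+1\}$ terms, which you absorb separately) is
\[
2(n-2)\,g(Y_k) + (n-3)\,a_3(Y_k) \;=\; 2Y_k - Y_k^2 - \tfrac23,
\]
with the $n$-dependent coefficients cancelling exactly; this agrees with your version $2n\,g + (n-3)f_1 = 6y-3y^2-2$ after adding back the $-4g$ from the on-diagonal piece. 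Granting this, $(n-2)\sum_i h(Y_i,Y_{i+1})$ carries the limit; degeneracy plus conditional independence of $Y_1$ and $Y_3$ given $Y_2$ kills the adjacent covariance, the $1$-dependent CLT applies, and Hoeffding orthogonality gives $\Var h = \E\phi(Y_1,Y_2)^2 - 2\E[\phi(Y_1,Y_2)\phi(Y_1,Y_3)] + (\E\phi)^2$, matching the numerator of \eqref{tauform}. For continuous $Y$ this is $1/90$, and with $\E[G(Y)(1-G(Y))]=1/6$ one gets $\tau^2 = 36\cdot\tfrac1{90} = 2/5$. Your positivity argument is correct as well: $h\equiv 0$ forces $\min(F(y),F(y')) = \tfrac12(F(y)+F(y'))$ for all $y,y'$, hence $F$ constant.

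What is genuinely underdeveloped — and you flag it yourself — is the remainder control. The off-diagonal piece $\sum_{i}\sum_{j\neq i,i+1}\bigl[\mathbf 1(Y_j\le Z_i)-F(Z_i)\bigr]$ has second- and third-order degenerate Hoeffding components that form a \emph{constrained} degenerate array (consecutive $i$'s share $Y_{i+1}$, and the index set excludes $j\in\{i,i+1\}$), so the $O(n^2)$ variance bound is not a one-line U-statistic fact but requires tracking which index-pairs can coincide; this is where the cited proof does most of its work. Likewise, the treatment of the tied/discontinuous case is more than ``bookkeeping'': the polynomial identity that produces the exact order-$n$ cancellation was computed with $F(Y)\sim\mathrm{Uniform}[0,1]$, and the general case must be redone with arbitrary $F$ and $G$, including the tie-breaking randomization in the ranks; it is precisely this generalization that makes $\tau^2$ depend on the law of $Y$ in the non-continuous case. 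So the blueprint is right and the constants check out, but these two steps need to be carried out in full before the argument is complete.
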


The reason why $\tau^2$ does not depend on the law of $Y$ if $Y$ is continuous is that in this case $F(Y)$ and $G(Y)$ are Uniform$[0,1]$ random variables, which implies that the expectations in \eqref{tauform} do not depend on the law of $Y$. If $Y$ is not continuous, then $\tau^2$ may depend on the law of $Y$. For example, it is not hard to show that if $Y$ is a Bernoulli$(1/2)$ random variable, then $\tau^2=1$. Fortunately, if $Y$ is not continuous, there is a simple way to estimate $\tau^2$ from the data using the estimator
\[
\hat{\tau}^2_n = \frac{a_n-2b_n+c_n^2}{d_n^2},
\]
where  $a_n$, $b_n$, $c_n$ and $d_n$ are defined as follows. For each $i$, let 
\begin{equation}\label{rldef}
R(i) := \#\{j: Y_j\le Y_i\}, \ \ \ L(i) := \#\{j: Y_j\ge Y_i\}. 
\end{equation}
(Note that $R(i)$ and $L(i)$ are different than $r_i$ and $l_i$ defined earlier.) Let $u_1\le u_2\le\cdots \le u_n$ be an increasing rearrangement of $R(1),\ldots, R(n)$. Let $v_i := \sum_{j=1}^i u_j$ for $i=1,\ldots,n$.  Define
\begin{align*}
&a_n := \frac{1}{n^4}\sum_{i=1}^n (2n-2i+1) u_i^2, \ \ \ b_n := \frac{1}{n^5}\sum_{i=1}^n (v_i + (n-i)u_i)^2, \\
&c_n := \frac{1}{n^3}\sum_{i=1}^n (2n-2i+1)u_i, \ \ \  d_n := \frac{1}{n^3}\sum_{i=1}^n L(i)(n-L(i)). 
\end{align*}
Then the following holds.
\begin{thm}[\cite{chatterjee21a}]\label{estthm}
The estimator $\hat{\tau}_n^2$ can be computed in time $O(n\log n)$, and converges to $\tau^2$ almost surely as $n\to\infty$. 
\end{thm}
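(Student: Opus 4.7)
The plan is to treat the two assertions separately. For the $O(n \log n)$ runtime, all four of $a_n, b_n, c_n, d_n$ are elementary functions of the rank sequences $R(1), \ldots, R(n)$ and $L(1), \ldots, L(n)$; a single sort of $Y_1, \ldots, Y_n$ together with a binary-search pass yields both rank sequences, a second sort produces $u_1 \le \cdots \le u_n$ from the $R(i)$'s, a prefix sum gives the $v_i$, and each of $a_n, b_n, c_n, d_n$ is then a single linear pass.

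The bulk of the work is the almost sure convergence, and my approach is to rewrite each quantity as a symmetric empirical average in the ranks. The key combinatorial identities are that since the $u_i$ are sorted,
\[
v_i + (n-i)\, u_i = \sum_{j=1}^n \min(u_i, u_j), \qquad \sum_{k=1}^n (2n - 2k + 1)\, u_k^r = \sum_{i,j=1}^n \min(u_i, u_j)^r \qquad (r = 1, 2),
\]
the second identity by counting pairs with $\min(i,j) = k$. Composing with the sorting permutation (which leaves the resulting symmetric sums invariant) rewrites the statistics as
\[
c_n = \frac{1}{n^3}\sum_{i,j} \min(R(i), R(j)),\qquad a_n = \frac{1}{n^4}\sum_{i,j} \min(R(i), R(j))^2,
\]
\[
b_n = \frac{1}{n^5}\sum_{i,j,k} \min(R(i), R(j))\, \min(R(i), R(k)),\qquad d_n = \frac{1}{n^3}\sum_{i} L(i)(n - L(i)).
\]

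Next, observe that $R(i)/n = \hat F_n(Y_i)$ and $L(i)/n = \hat G_n(Y_i)$, the empirical CDF and survival function evaluated at $Y_i$. The Glivenko--Cantelli theorem gives $\sup_t |\hat F_n(t) - F(t)| + \sup_t |\hat G_n(t) - G(t)| \to 0$ almost surely. After factoring the explicit $n$'s out of the $\min$'s so that each summand of $a_n, b_n, c_n, d_n$ is bounded in $[0,1]$ and $1$-Lipschitz in each rank ratio, replacing $\hat F_n$ by $F$ and $\hat G_n$ by $G$ incurs only $o(1)$ aggregate error in each of the four quantities. This reduces them, up to $o(1)$, to the bounded empirical U-statistics
\[
\frac{1}{n^2}\sum_{i,j} \phi(Y_i, Y_j)^r\ (r = 1,2), \qquad \frac{1}{n^3}\sum_{i,j,k} \phi(Y_i, Y_j)\, \phi(Y_i, Y_k), \qquad \frac{1}{n}\sum_i G(Y_i)(1 - G(Y_i)),
\]
which converge almost surely to $\ee \phi(Y_1, Y_2)^r$, $\ee \phi(Y_1, Y_2)\phi(Y_1, Y_3)$, and $\ee G(Y)(1 - G(Y))$ by the strong law for bounded U-statistics (the $O(n)$ and $O(n^2)$ diagonal terms are absorbed into $o(1)$). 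Since $\ee G(Y)(1 - G(Y)) > 0$ whenever $Y$ is non-constant, the continuous mapping theorem applied to $\hat\tau_n^2 = (a_n - 2b_n + c_n^2)/d_n^2$ delivers the claimed limit $\tau^2$ from~\eqref{tauform}.

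The step I expect to be the main obstacle is the error bookkeeping in the Glivenko--Cantelli substitution, since $b_n$ is a triple sum divided by $n^5$ and one must check that the uniform rank error $\varepsilon_n \to 0$ does not get amplified. It does not: after the normalization, each summand of $b_n$ is a product of two numbers in $[0,1]$, so the sup-norm error in a single summand under the substitution $\hat F_n \to F$ is $O(\varepsilon_n)$, and averaging over $n^3$ triples preserves this bound. The analogous check handles $a_n, c_n, d_n$, after which the rest is routine.
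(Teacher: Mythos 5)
Your proof is correct; the survey does not itself prove this theorem (it cites \cite{chatterjee21a}), so there is no paper proof to compare against, but your combinatorial identities ($\min(u_i,u_j)=u_{\min(i,j)}$ for sorted $u$, the count $2n-2k+1$ of pairs with $\min(i,j)=k$, and $v_i+(n-i)u_i=\sum_j\min(u_i,u_j)$), followed by the Glivenko--Cantelli substitution and the SLLN for bounded V-statistics, give a complete and rigorous derivation. Two minor imprecisions worth flagging, neither of which affects validity: the squared-min kernel in $a_n$ is $2$-Lipschitz rather than $1$-Lipschitz in the rank ratios, and the positivity $\ee\,G(Y)(1-G(Y))>0$ for non-constant $Y$ (needed for the continuous mapping step) deserves a line of justification --- it holds because $G(Y)>0$ almost surely while $G(Y)<1$ on the positive-probability event that $Y$ exceeds its essential infimum.
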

The question of proving a central limit theorem for $\xi_n$ in the absence of independence is much more difficult than the independent case. This was left as an open question in \cite{chatterjee21a} and recently resolved in complete generality by \citet{linhan22}, following earlier proofs of \citet*{debetal20} and \citet*{shietal21} under additional assumptions. \citet{linhan22} also give a consistent estimator of the asymptotic variance of $\xi_n$ in the absence of independence, solving another question that was left open in \cite{chatterjee21a}. A central limit theorem for the symmetrized version of $\xi_n$ (defined as the maximum of $\xi_n(X,Y)$ and $\xi_n(Y,X)$) under the hypothesis of independence was proved by~\citet{zhang22}.

\section{Power for testing independence}
A deficiency of $\xi_n$, as already pointed out in \cite{chatterjee21a} through simulated examples, is that it has low power for testing independence against `standard' alternatives, such as linear or monotone associations. This was theoretically confirmed by~\citet*{shietal22}, where it was shown that the test of independence using $\xi_n$ is rate-suboptimal against a family of local alternatives, whereas three other nonparametric tests of independence proposed in \cite{hoeffding48, blumetal61, bergsmadassios14, yanagimoto70} are rate-optimal. Like $\xi_n$, the three competing test statistics considered in \cite{shietal22} are also computable in $O(n\log n)$ time. Similar results were obtained for a different type of competing test statistic by  \citet{caobickel20}. 

A more detailed analysis of the power properties of $\xi_n$ was carried out by \citet*{auddyetal21}, where the asymptotic distribution of $\xi_n$ under any changing sequence of alternatives converging to the null hypothesis of independence was computed. This analysis yielded exact detection thresholds and limiting power under natural alternatives converging to the null, such as mixture models, rotation models and noisy nonparametric regression. The detection boundary lies at distance $n^{-1/4}$ from the null, instead of the more standard $n^{-1/2}$. This is similar to the power properties of other `graph-based' statistics for testing independence, such as the Friedman--Rafsky statistic~\cite{friedmanrafsky83, bhattacharya19}.

A proposal for `boosting' the power of $\xi_n$ for testing independence, by incorporating multiple nearby ranks instead of only the nearest ones, was recently proposed by \citet{linhan21}. This modified estimator was shown to attain near-optimal rates of power against certain classes of alternative hypotheses.

The conceptual reason behind the absence of local power of statistics such as $\xi_n$ was explained by \citet{bickel22}. An interesting question that remains unexplained is the following. It is seen in simulations that although $\xi_n$ has low power for testing independence against standard alternatives such as linear and monotone, it becomes more powerful as the signal starts to get more and more oscillatory \cite{chatterjee21a}. This gives an advantage over other coefficients in applications where oscillatory signals arise naturally~\cite{chen20, sadeghi22}, and suggests that $\xi_n$ may be efficient for certain kinds of local alternatives. No result of this sort has yet been proven.

\section{Multivariate extensions}\label{multsec}
Many methods have been proposed for testing independence nonparametrically in the multivariate setting. This includes classical tests~\cite{purietal70, grettonetal05a, szekelyetal07} as well as a flurry of recent ones proposed in the last ten years~\cite{helleretal12, hellerheller16, helleretal13, zhuetal17, weihsetal18, kimetal20, debsen21, shietal22a, berrettetal21, shietal22b, debetal20}. 

Most of these papers are concerned only with testing independence, and not with measuring the strength of dependence as measured by a correlation coefficient such as $\xi_n$. Unfortunately, the $\xi_n$ coefficient and many other popular univariate coefficients do not readily generalize to the multivariate setting because they are based on ranks. Statisticians have started taking a new look at this old problem in recent years by considering a multivariate notion of rank defined using optimal transport. Roughly speaking, the idea is as follows. Let $\nu$ be a `reference measure' in $\R^d$, akin to the uniform distribution on $[0,1]$ in $\R$. Given any probability measure $\mu$ in $\R^d$, let $F^\mu:\R^d \to \R^d$ be the map that `optimally transports' $\mu$ to $\nu$ --- that is, if $X\sim \mu$ then $F^\mu(X) \sim \nu$, and $F^\mu$ minimizes $\ee\|X - F(X)\|^2$ among all such $F$. By a theorem of \citet{mccann95}, such a map exists and is unique if $\mu$ and $\nu$ are both absolutely continuous with respect to Lebesgue measure. For example, when $d=1$, $F^\mu$ is just the cumulative distribution function of $X$, which transforms $X$ into a Uniform$[0,1]$ random variable. For properties of this map, see, e.g.,~\citet{figalli18} and \citet{hallinetal21}. 

The above idea suggests a natural definition of multivariate rank. If $X_1,\ldots,X_n$ are i.i.d.\ samples from $\mu$, one can try to estimate $F^\mu$ using this data. Let $F_n^\mu$ be an estimate. Then $F_n^\mu(X_i)$ can act as a `multivariate rank' of $X_i$ among $X_1,\ldots, X_n$, divided by $n$. Since $F^\mu(X_i)\sim \nu$, we can then assume that $F_n^\mu(X_i)$ is approximately distributed according to $\nu$, and then try to test for independence of random vectors using a test for independence that works when the marginal distributions are both $\nu$. This idea has been made precise in a number of recent works in the statistics literature, such as \citet{chernozhukovetal17},   \citet*{debetal20}, \citet{debsen21},  \citet{hallinetal21}, \citet{manoleetal21}, \citet*{shietal22a}, \citet{ghosalsen22}, \citet{mordantsegers22} and \citet{shietal22b, shietal21a}. For a survey and further discussions, see \citet{han21}.

A direct generalization of $\xi_n$ to higher dimensional spaces has not been proposed so far, although the variant proposed in \citet{debetal20} satisfies the same properties as $\xi_n$  provided that the space on which $Y$ takes values admits a nonnegative definite kernel and the  space on which $X$ takes values has a metric. This covers most spaces that arise in practice. There are a couple of other generalizations, proposed by \citet{azadkiachatterjee21} and \citet{gamboaetal22}, on measuring the dependence between a univariate random variable $Y$ and a random vector $\bbx$. The coefficient proposed in \cite{azadkiachatterjee21} (discussed in detail in Section \ref{condsec}) is based on a generalization of the ideas behind the construction of $\xi_n$. The coefficient  proposed in~\cite{gamboaetal22} combines the construction of $\xi_n$ with   ideas from the theory of Sobol indices. A new contribution of the present paper is a simple generalization of $\xi_n$ to standard Borel spaces, which has been overlooked in the literature until now. This is presented in Section \ref{newsec}. 

\section{Measuring conditional dependence}\label{condsec}
The problem of measuring conditional dependence has received less attention than the problem of measuring unconditional dependence, partly because it is a more difficult task. Non-parametric conditional independence can be tested for discrete data using the classical Cochran--Mantel--Haenszel test \cite{cochran54, mantelhaenszel59}, which  can be adapted for continuous random variables by binning the data~\cite{huang10} or using kernels~\cite{fukumizuetal07, zhangetal12, strobletal19, doranetal14, senetal17}.  Besides these, there are methods based on estimating conditional cumulative distribution functions~\cite{lintongozalo97, patraetal16},  conditional characteristic functions~\cite{suwhite07, keyin19}, conditional probability density functions~\cite{suwhite08}, empirical likelihood~\cite{suwhite14}, mutual information and entropy~\cite{runge18, joe89, poczosschneider12}, copulas~\cite{bergsma04, song09, veraverbekeetal11}, distance correlation~\cite{wangetal15, fanetal20, szekelyrizzo14}, and other approaches~\cite{sethprincipe12}. A number of interesting ideas based on resampling and permutation tests have been proposed in recent years~\cite{candesetal18, senetal17, berrettetal20}.

In  \citet{azadkiachatterjee21}, a new coefficient of conditional dependence was proposed, based on the ideas behind the $\xi$-coefficient defined in \cite{chatterjee21a}. Like the $\xi$-coefficient, this one also has a long list of desirable features, such as being fully nonparametric and working under minimal assumptions. The coefficient is defined as follows. 

Let $Y$ be a random variable and $\bbx = (X_1,\ldots,X_p)$ and $\bbz = (Z_1,\ldots, Z_q)$ be random vectors, all defined on the same probability space. Here $q\ge 1$ and $p\ge 0$. The value $p=0$ means that $\bbx$ has no components at all.  Let $\mu$ be the law of $Y$. The following quantity was proposed in \cite{azadkiachatterjee21} as a measure of the degree of conditional dependence of $Y$ and $\bbz$ given $\bbx$:
\begin{equation}\label{tdef0}
T = T(Y,\bbz|\bbx) := \frac{\int \ee(\var(\pp(Y\ge t|\bbz, \bbx)|\bbx)) d\mu(t)}{\int \ee(\var(1_{\{Y\ge t\}}|\bbx))d\mu(t)}.
\end{equation}
If the denominator equals zero, $T$ is undefined.  If $p=0$, then $\bbx$ has no components, and the conditional expectations and variances given $\bbx$ should be interpreted as unconditional expectations and variances. In this case we write $T(Y,\bbz)$ instead of $T(Y,\bbz|\bbx)$. Note that $T$ is a generalization of the statistic $\xi$ appearing in Theorem \ref{mainthm}. The following theorem summarizes the main properties of~$T$.
\begin{thm}[\cite{azadkiachatterjee21}]\label{popthm}
Suppose that $Y$ is not almost surely equal to a measurable function of $\bbx$ (when $p=0$, this means that $Y$ is not almost surely a constant). Then $T$ is well-defined and $0\le T \le 1$. Moreover, $T= 0$ if and only if $Y$ and $\bbz$ are conditionally independent given $\bbx$, and $T=1$ if and only if $Y$  is almost surely equal to a measurable function of $\bbz$ given $\bbx$. When $p=0$, conditional independence given $\bbx$ simply means unconditional independence. 
\end{thm}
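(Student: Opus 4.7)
The entire theorem rests on a single application of the law of total variance. For each $t \in \rr$, since $\sigma(\bbx) \subseteq \sigma(\bbz, \bbx)$ and $\ee(1_{\{Y\ge t\}}|\bbz,\bbx) = \pp(Y\ge t|\bbz,\bbx)$, I have
\begin{equation}\label{plan-tv}
\var(1_{\{Y\ge t\}}|\bbx) \;=\; \ee\bigl(\var(1_{\{Y\ge t\}}|\bbz,\bbx)\,\big|\,\bbx\bigr) \;+\; \var\bigl(\pp(Y\ge t|\bbz,\bbx)\,\big|\,\bbx\bigr).
\end{equation}
Taking expectations and integrating against $\mu$ (Tonelli applies since the integrands are nonnegative), the denominator $D$ and numerator $N$ of $T$ satisfy $D = N + R$, where $R := \int \ee(\var(1_{\{Y\ge t\}}|\bbz,\bbx))\, d\mu(t) \ge 0$. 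This immediately gives $0 \le N \le D$, so $T \in [0,1]$ as soon as $D > 0$. For well-definedness I argue the contrapositive: if $D = 0$, then $\pp(Y\ge t|\bbx)\in\{0,1\}$ almost surely for $\mu$-a.e.\ $t$. Choosing a countable set $\Delta \subset \rr$ of such $t$'s dense in $\supp(\mu)$ and defining the measurable function $f(\bbx) := \sup\{t \in \Delta : \pp(Y \ge t|\bbx) = 1\}$, a short verification shows $Y = f(\bbx)$ a.s., contradicting the hypothesis on $Y$.

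The two equality characterizations now fall out of $D = N + R$. For the direction $T = 0$: $N = 0$ forces $\pp(Y \ge t|\bbz,\bbx) = \pp(Y \ge t|\bbx)$ almost surely for $\mu$-a.e.\ $t$. Passing to a countable dense subset on which the equality holds jointly, and using the fact that $t \mapsto \pp(Y\ge t|\,\cdot\,)$ is nonincreasing and left-continuous, I extend the equality to all $t \in \rr$ almost surely. This identifies the regular conditional distributions of $Y$ given $(\bbz,\bbx)$ and of $Y$ given $\bbx$, which is precisely $Y \perp \bbz \mid \bbx$; the converse is trivial from~\eqref{plan-tv}. For the direction $T = 1$: $N = D$ forces $R = 0$, so $1_{\{Y\ge t\}}$ is a.s.\ $\sigma(\bbz,\bbx)$-measurable for $\mu$-a.e.\ $t$, and the same quantile construction $f(\bbz,\bbx) := \sup\{t \in \Delta : \pp(Y\ge t|\bbz,\bbx) = 1\}$ yields a measurable $f$ with $Y = f(\bbz,\bbx)$ a.s.; again the converse is immediate.

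The principal technical obstacle, recurring in each of the three "iff" assertions, is the measure-theoretic passage from a statement that holds \emph{for $\mu$-a.e.\ $t$ and almost surely separately} to a single almost-sure statement holding for \emph{all} $t$ simultaneously, together with the extraction of an honest measurable function of the conditioning variables. Handling this cleanly needs a careful choice of a countable set dense in $\supp(\mu)$ that includes every atom of $\mu$, a joint-null-set argument to secure simultaneous equality across this set, and a conditional-quantile / measurable-selection construction as above. Once those technicalities are in place, the remainder of the proof is nothing more than the total-variance identity~\eqref{plan-tv} and some bookkeeping.
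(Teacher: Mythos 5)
Your proposal is correct and takes essentially the same route as the paper: the entire argument hinges on the law of total variance applied to $1_{\{Y\ge t\}}$ with the two conditioning levels $\sigma(\bbx)\subseteq\sigma(\bbz,\bbx)$, yielding the decomposition of the denominator as numerator plus a nonnegative remainder, from which boundedness and both ``iff'' characterizations follow; the paper handles the measure-theoretic extraction of a function of the conditioning variables in exactly the way you outline, via a countable set of $t$'s and a conditional-quantile construction. You have also correctly identified the genuinely delicate point — upgrading ``for $\mu$-a.e.\ $t$, almost surely'' to a single almost-sure statement uniform in $t$ and producing an honest measurable function — as the place where the real work lies.
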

Now suppose we have data consisting of $n$ i.i.d.~copies $(Y_1,\bbx_1,\bbz_1),\ldots,(Y_n, \bbx_n,\bbz_n)$ of the triple $(Y,\bbx,\bbz)$, where $n\ge 2$. For each $i$, let $N(i)$ be the index $j$ such that $\bbx_j$ is the nearest neighbor of $\bbx_i$ with respect to the Euclidean metric on $\rr^p$, where ties are broken uniformly at random. Let $M(i)$ be the index $j$ such that $(\bbx_j, \bbz_j)$ is the nearest neighbor of $(\bbx_i, \bbz_i)$ in $\rr^{p+q}$, again with ties broken uniformly at random. Let $R_i$ be the rank of $Y_i$, that is, the number of $j$ such that $Y_j\le Y_i$. If $p\ge 1$, define 
\[
T_n = T_n(Y, \bbz|\bbx) := \frac{\sum_{i=1}^n (\min\{R_i, R_{M(i)}\} - \min\{R_i, R_{N(i)}\})}{\sum_{i=1}^n (R_i - \min\{R_i, R_{N(i)}\})}.
\]
If $p=0$, let $L_i$ be the number of $j$ such that $Y_j\ge Y_i$, let $M(i)$ denote the $j$ such that $\bbz_j$ is the nearest neighbor of $\bbz_i$ (ties broken uniformly at random), and let
\[
T_n = T_n(Y, \bbz) := \frac{\sum_{i=1}^n (n\min\{R_i, R_{M(i)}\} - L_i^2)}{\sum_{i=1}^n L_i(n-L_i)}.
\]
In both cases, $T_n$ is undefined if the denominator is zero. 
The following theorem shows that $T_n$ is a consistent estimator of $T$.
\begin{thm}[\cite{azadkiachatterjee21}]\label{sampthm}
Suppose that $Y$ is not almost surely equal to a measurable function of $\bbx$. Then as  $n\to\infty$, $T_n \to T$ almost surely.
\end{thm}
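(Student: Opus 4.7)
The plan is to follow the two-step template of Theorem \ref{mainthm} sketched in Section 2, but to perform the nearest--neighbor reduction on two scales in parallel: one with respect to $\bbx$ and one with respect to $(\bbx,\bbz)$. I focus on the case $p\ge 1$; the case $p=0$ is handled analogously using $L_i$ alongside $R_i$. Using the identity $\min(R_i,R_j)=\#\{k:Y_k\le\min(Y_i,Y_j)\}$ and invoking Glivenko--Cantelli twice (once to pass from ranks to the CDF of $Y$, once to pass from the empirical law of $Y$ to $\mu$), I would rewrite, up to $o(1)$ errors, $T_n=A_n/B_n$, where
\[
A_n=\frac{1}{n}\sum_{i=1}^n\int\bigl(\1_{\{Y_i\ge t\}}\1_{\{Y_{M(i)}\ge t\}}-\1_{\{Y_i\ge t\}}\1_{\{Y_{N(i)}\ge t\}}\bigr)\,d\mu(t),
\]
\[
B_n=\frac{1}{n}\sum_{i=1}^n\int\1_{\{Y_i\ge t\}}\bigl(1-\1_{\{Y_{N(i)}\ge t\}}\bigr)\,d\mu(t).
\]

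The key probabilistic input is that $\bbx_{N(i)}\to\bbx_i$ and $(\bbx_{M(i)},\bbz_{M(i)})\to(\bbx_i,\bbz_i)$ in probability, a standard nearest--neighbor fact in Euclidean space. Combined with the Lebesgue differentiation theorem applied to the regular conditional distribution of $Y$, this says that $(Y_i,Y_{N(i)})$ is asymptotically conditionally i.i.d.\ given $\bbx_i$, while $(Y_i,Y_{M(i)})$ is asymptotically conditionally i.i.d.\ given $(\bbx_i,\bbz_i)$. Hence $\ee[\1_{\{Y_i\ge t\}}\1_{\{Y_{N(i)}\ge t\}}]\to\ee[\pp(Y\ge t\mid\bbx)^2]$ and $\ee[\1_{\{Y_i\ge t\}}\1_{\{Y_{M(i)}\ge t\}}]\to\ee[\pp(Y\ge t\mid\bbx,\bbz)^2]$. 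Subtracting and using the variance identities $\ee[\pp(Y\ge t\mid\bbx,\bbz)^2]-\ee[\pp(Y\ge t\mid\bbx)^2]=\ee[\var(\pp(Y\ge t\mid\bbx,\bbz)\mid\bbx)]$ and $\pp(Y\ge t)-\ee[\pp(Y\ge t\mid\bbx)^2]=\ee[\var(\1_{\{Y\ge t\}}\mid\bbx)]$, and integrating against $d\mu(t)$ via Fubini, identifies the limits of $\ee A_n$ and $\ee B_n$ with the numerator and denominator of $T$ in \eqref{tdef0}, respectively.

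For the almost-sure upgrade, observe that $A_n$ and $B_n$ are averages of $n$ bounded terms with a strictly local nearest--neighbor dependency: changing one sample $(Y_i,\bbx_i,\bbz_i)$ alters only those summands indexed by $j$ with $i\in\{N(j),M(j)\}$, and a classical geometric argument bounds the in-degree of the $1$-NN graph in $\rr^d$ by a dimension-dependent constant. McDiarmid's bounded differences inequality with differences of order $1/n$ then gives exponential concentration, and Borel--Cantelli yields $A_n-\ee A_n\to 0$ and $B_n-\ee B_n\to 0$ almost surely. Theorem \ref{popthm} guarantees that the limiting denominator is strictly positive under the hypothesis on $(Y,\bbx)$, so the ratio converges and $T_n\to T$ a.s. The main obstacle is the second step: the nearest--neighbor conditional-i.i.d.\ replacement must be made rigorous without any continuity assumption on the joint law of $(Y,\bbx,\bbz)$, uniformly enough in $t$ to survive integration against $\mu$, and robustly against tie-breaking at atoms.
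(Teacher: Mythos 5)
Your outline follows essentially the same route as the proof in \cite{azadkiachatterjee21}: replace ranks by CDF values via Glivenko--Cantelli, write the numerator and denominator of $T_n$ as averages over $i$ of bounded functionals of $(Y_i,Y_{N(i)})$ and $(Y_i,Y_{M(i)})$, identify the limits of the expectations through a nearest-neighbor conditional-i.i.d.\ replacement and the variance identities that reproduce the numerator and denominator of $T$ in \eqref{tdef0}, and upgrade to almost-sure convergence by a bounded-difference concentration argument exploiting the dimension-dependent bound on the in-degree of the $1$-NN graph. Positivity of the limiting denominator follows from Theorem~\ref{popthm}, as you note. So the skeleton is right.

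However, the ``main obstacle'' you name at the end is not a side remark---it is the technical core of the proof, and your sketch leaves it open. Three ingredients are needed to close it. First, the conditional independence of $Y_i$ and $Y_{N(i)}$ given $(\bbx_1,\dots,\bbx_n)$ and the tie-breaking randomization is \emph{exact}, not asymptotic, because $N(i)$ is measurable with respect to these; so the only approximation is in replacing $\pp(Y\ge t\mid\bbx=\bbx_{N(i)})$ by $\pp(Y\ge t\mid\bbx=\bbx_i)$ (and similarly for $M(i)$). Second, since $x\mapsto\pp(Y\ge t\mid\bbx=x)$ is merely bounded measurable, a pointwise continuity argument is unavailable; the proof in \cite{azadkiachatterjee21} instead combines $\|\bbx_{N(i)}-\bbx_i\|\to 0$ almost surely (stronger than the convergence in probability you quote) with a domination lemma: the law of $\bbx_{N(1)}$ is absolutely continuous with respect to the law of $\bbx$, with density bounded by a constant depending only on the dimension, which lets one approximate the regression function in $L^1$ by a continuous function simultaneously at $\bbx_i$ and $\bbx_{N(i)}$. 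Third, atoms of the law of $\bbx$ (and of $(\bbx,\bbz)$) need a short separate argument: for $n$ large the nearest neighbor of a sample at an atom eventually sits at the same atom, so the replacement is exact there. Boundedness and dominated convergence then justify the interchange with $\int\cdot\,d\mu(t)$. Until these are supplied, your step two is a heuristic, not a proof.
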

For various other properties of $T_n$, such as rate of convergence, performance in simulations and real data, etc., see \cite{azadkiachatterjee21}. One problem that was left unsolved in \cite{azadkiachatterjee21} was the question of proving a central limit theorem for $T_n$ under the null hypothesis, which is crucial for carrying out tests for conditional independence. This question was partially resolved by \citet*{shietal21}, who proved a central limit theorem for $T_n$ under the assumption that $Y$ is independent of $(\bbx,\bbz)$. An improved version of this result was proved recently by \citet{linhan22}. A version for data supported on manifolds was proved by  \citet{hanhuang22}. 

The paper of \citet{shietal21} also develops the `conditional randomization test' (CRT) framework of \citet{candesetal18} to test conditional independence using $T_n$, and find that $T_n$, like $\xi_n$ is an inefficient test statistic. To address this concern, an improved generalization of $T_n$, called `kernel partial correlation' (KPC), was proposed by \citet*{huangetal20}. Unlike $T_n$, KPC has the flexibility to use more than one nearest neighbor, which gives it better power properties.

Note that by the above theorems, a test of conditional independence based on $T_n$ is consistent against all alternatives. The problem is that in the absence of an asymptotic theory for $T_n$, it is difficult to control the significance level of such a test. This is in fact an impossible problem, by a recent result of \citet{shahpeters20} that proves hardness of conditional independence testing in the absence of smoothness assumptions. Assuming some degree of smoothness, minimax optimal conditional independence tests were recently constructed by \citet*{neykovetal21} and \citet{kimetal21}.

\section{Application to nonparametric variable selection}
The commonly used variable selection methods in the statistics literature rely on linear or  additive models. This includes classical methods~\cite{breiman95, georgemcculloch93, chendonoho94, tibshirani96, efronetal04, friedman91, hastieetal09, miller02} as well as modern ones~\cite{candestao07,   zou06, zouhastie05, yuanlin06, fanli01,  ravikumaretal09}. These methods are powerful and widely used in practice. However, they sometimes run  into problems when significant interaction effects or nonlinearities are present. Such problems can be overcome by model-free methods~\cite{candesetal18, ho98, amitgeman97, breiman96, freundschapire96, hastieetal09, breimanetal84, battiti94, vergaraestevez14, breiman01}. On the flip side, the  theoretical foundations of model-free methods are usually weaker than those of model-based methods.

In an attempt to combine the best of both worlds, a new method of variable selection called Feature Ordering by Conditional Independence~(FOCI), was proposed in \citet{azadkiachatterjee21}. This method uses the  conditional dependence coefficient $T_n$ described in the previous section in a stepwise fashion, as follows. Let $Y$ be the response variable and let $\bbx = (X_j)_{1\le j\le p}$ be the set of predictors. The data consists of $n$ i.i.d.~copies of $(Y,\bbx)$. First, choose $j_1$ to be the index $j$ that maximizes $T_n(Y, X_j)$. Having obtained $j_1,\ldots, j_k$,  choose $j_{k+1}$ to be the index $j\notin\{j_1,\ldots, j_{k}\}$ that maximizes $T_n(Y, X_j|X_{j_1},\ldots,X_{j_{k}})$. Continue like this until arriving at the first $k$ such that $T_n(Y, X_{j_{k+1}}|X_{j_1},\ldots,X_{j_{k}})\le 0$, and then declare the chosen subset to be $\hat{S} := \{j_1,\ldots,j_k\}$. If there is no such $k$, define $\hat{S}$ to be the whole set of variables. It may also happen that $T_n(Y, X_{j_1})\le 0$. In that case declare $\hat{S}$ to be empty. Note that this variable selection procedure involves no choice of tuning parameters, which may be an advantage in practice.

It was shown in \cite{azadkiachatterjee21} that under mild conditions, the method selects a `correct' set of variables with high probability. More precisely, it was shown that with high probability, the set $\hat{S}$ selected by FOCI has the property that $Y$ and $(X_j)_{j\notin S}$ are conditionally independent given $(X_j)_{j\in \hat{S}}$. In other words, all the information about $Y$ that one can get from $\bbx$ is contained in $(X_j)_{j\in \hat{S}}$. For further properties of FOCI and its performance in simulations and real data sets, see \cite{azadkiachatterjee21}.

An improved version of FOCI called KFOCI (`Kernel FOCI') was proposed by \citet{huangetal20}. An application of FOCI to causal inference, via an algorithm named DAG-FOCI, was introduced in \citet*{azadkiaetal21}. For another application to causal inference, see \citet{chatterjeevidyasagar22}.


\section{A new proposal: Generalization to standard Borel spaces}\label{newsec}
In this section, a simple but wide ranging generalization of $\xi_n$ is proposed. In hindsight, this generalization seems obvious, but somehow  this was overlooked both in the original paper \cite{chatterjee21a} as well as in the subsequent developments listed in Section \ref{multsec}. 

Recall that two measurable spaces are said to be isomorphic to each other if there is a bijection between the two spaces which is measurable and whose inverse is also measurable. Recall that a {\it standard Borel space} is a measurable space that is isomorphic to a Borel subset of a Polish space~\cite[Chapter 3]{srivastava98}. In particular, every Borel subset of every Polish space is a standard Borel space. The Borel isomorphism theorem says that any uncountable standard Borel space is isomorphic to the real line (see \citet{raosrivastava94} for an elementary proof). In particular, if $\mx$ is any standard Borel space, there is a measurable map $\varphi:\mx \to \rr$ such that $\varphi$ is injective, $\varphi(\mx)$ is Borel, and $\varphi^{-1}$ is measurable on $\varphi(\mx)$. We will say that $\varphi$ is an isomorphism between $\mx$ and a Borel subset of the real line, or simply, a `Borel isomorphism'.

Now let $\mx$ and $\my$ be two standard Borel spaces. Let $\varphi$ be a Borel isomorphism  of  $\mx$ and $\psi$ be a Borel isomorphism of $\my$. Let $(X,Y)$ be an $\mx\times \my$-valued pair of random variables, and let $(X_1,Y_1),\ldots,(X_n,Y_n)$ be i.i.d.~copies of $(X,Y)$. Let $X':= \varphi(X)$ and $Y' := \psi(Y)$, so that $(X',Y')$ is a pair of real-valued random variables. Let $X_i' := \varphi(X_i)$ and $Y_i' := \psi(Y_i)$ for each $i$. Finally, define
\[
\xi_n(X,Y) := \xi_n(X',Y'),
\]
where $\xi_n(X',Y')$ is defined using $(X_1',Y_1'),\ldots,(X_n',Y_n')$ as in equation \eqref{xindef}. Note that the definition of $\xi_n(X,Y)$ depends on our choices of $\varphi$ and $\psi$. Different choices of isomorphisms would lead different definitions of $\xi_n$. The following theorem generalizes Theorem \ref{mainthm}. 
\begin{thm}\label{mainthm2}
If $Y$ is not almost surely a constant, then as $n\to\infty$, $\xi_n(X,Y)$ converges almost surely to the deterministic limit $\xi(X,Y)$, which equals $\xi(X',Y')$, defined as in \eqref{xidef} with $X'$ and $Y'$ in place of $X$ and $Y$. This limit belongs to the interval $[0,1]$. It is  $0$ if and only if $X$ and $Y$ are independent, and it is $1$ if and only if there is a measurable function $f:\rr\to\rr$ such that $Y=f(X)$ almost surely. Moreover, the asymptotic distribution of $\sqrt{n}\xi_n(X,Y)$ under the hypothesis of independence, as given by Theorems \ref{cltthm} and \ref{estthm}, also holds, provided that $\tau$ and $\hat{\tau}_n$ are computed using $X_i'$ and $Y_i'$  instead of $X_i$ and $Y_i$.
\end{thm}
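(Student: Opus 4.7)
The plan is to reduce everything to Theorems \ref{mainthm}, \ref{cltthm}, and \ref{estthm}, which already treat the real-valued pair $(X',Y')$. By construction $\xi_n(X,Y) = \xi_n(X',Y')$ and $(X_1',Y_1'),\ldots,(X_n',Y_n')$ are i.i.d.\ copies of $(X',Y')$, so the whole proof becomes a matter of translating the conclusions of those theorems from $(X',Y')$ back to $(X,Y)$ through the Borel isomorphisms $\varphi$ and $\psi$.

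First I would check the nondegeneracy hypothesis of Theorem \ref{mainthm}: since $\psi$ is injective, $Y' = \psi(Y)$ is almost surely constant if and only if $Y$ is. Theorem \ref{mainthm} then yields $\xi_n(X',Y') \to \xi(X',Y')$ almost surely with limit in $[0,1]$, and the definition $\xi(X,Y) := \xi(X',Y')$ immediately gives the analogous statement for $(X,Y)$.

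Next I would transfer the two characterizations. Because $\varphi$ and $\psi$ are bijective with measurable inverses on their (Borel) images, one has $\sigma(X) = \sigma(X')$ and $\sigma(Y) = \sigma(Y')$, so $X$ and $Y$ are independent if and only if $X'$ and $Y'$ are. For the functional characterization, if $Y = g(X)$ almost surely for some measurable $g:\mx\to\my$, then $Y' = (\psi\circ g\circ\varphi^{-1})(X')$ almost surely; extending this map arbitrarily off $\varphi(\mx)$ gives a measurable $f:\rr\to\rr$ with $Y' = f(X')$ almost surely. The converse is symmetric, using $g := \psi^{-1}\circ f\circ\varphi$. The phrase ``$Y = f(X)$ for a measurable $f:\rr\to\rr$'' in the theorem statement should be read through this identification of measurable maps $\mx\to\my$ with measurable maps $\rr\to\rr$ modulo the isomorphisms.

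The distributional statement follows by the same token. Theorem \ref{cltthm} applied to $(X',Y')$ under the equivalent null $X' \perp Y'$ gives $\sqrt{n}\,\xi_n(X',Y') \to N(0,\tau^2)$ with $\tau^2$ computed from the law of $Y'$ via \eqref{tauform}, and Theorem \ref{estthm} applied to the samples $Y_i'$ yields consistency of $\hat{\tau}_n^2$; since $\sqrt{n}\,\xi_n(X,Y) = \sqrt{n}\,\xi_n(X',Y')$ by definition, the same limit and estimator work for $(X,Y)$. The argument has no substantive obstacle beyond this bookkeeping; the one point worth flagging is that $\tau^2$ really does depend on the chosen isomorphism $\psi$ (through the pushforward law of $Y$ under $\psi$), not on the intrinsic law of $Y$, which is exactly why the theorem insists that $\tau$ and $\hat{\tau}_n$ be computed using the $Y_i'$ rather than the $Y_i$. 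A minor technical item that requires a line of care is that $\varphi^{-1}$ and $\psi^{-1}$ are defined only on $\varphi(\mx)$ and $\psi(\my)$, so the functions built in the previous paragraph must be extended to all of $\rr$ by any measurable rule in order to satisfy the ``$\rr\to\rr$'' wording.
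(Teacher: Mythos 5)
Your proposal is correct and follows essentially the same route as the paper's proof: reduce everything to $(X',Y')$ via the equality $\xi_n(X,Y)=\xi_n(X',Y')$ and transfer the conclusions of Theorems \ref{mainthm}, \ref{cltthm}, and \ref{estthm} through the Borel isomorphisms. Your additional care about extending $\psi\circ g\circ\varphi^{-1}$ measurably off the image of $\varphi$, and your observation that $\tau^2$ depends on the choice of $\psi$ (not just on the intrinsic law of $Y$), are sensible clarifications of points the paper passes over quickly, but the underlying argument is identical.
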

\begin{proof}
The convergence is clear from Theorem \ref{mainthm}. Also, by Theorem \ref{mainthm}, $\xi(X',Y') = 0$  if and only if $X'$ and $Y'$ are independent, and $\xi(X',Y')=1$ if and only if $Y'$ is a measurable function of $X'$. Since $X= \varphi^{-1}(X')$ and $Y = \psi^{-1}(Y')$, it follows that $X$ and $Y$ are independent if and only if $X'$ and $Y'$ are independent. For the same reason, $Y$ is a measurable function of $X$ almost surely if and only if $Y'$ is a measurable function of $X'$ almost surely. Note that $Y$ is not almost surely a constant if and only if $Y'$ is not almost surely a constant. Lastly, since $\xi_n(X,Y)$ is just $\xi_n(X',Y')$, any result about the asymptotic distribution of $\xi_n(X',Y')$, including Theorems \ref{cltthm} and \ref{estthm} of this draft, can be transferred to $\xi_n(X,Y)$.
\end{proof}

Just like the univariate case, the generalized $\xi_n$ has the advantage of working under zero assumptions and having a simple asymptotic theory, as shown by Theorem \ref{mainthm2}. On the other hand, just like the univariate coefficient, one can expect the generalized coefficient to also suffer from low power for testing independence. 

Theorem \ref{mainthm2} is a nice, clean result, but to implement the idea in practice, one needs to  work with  actual Borel isomorphisms. Here is an example of a Borel isomorphism between $\rr^d$ and a Borel subset of $\rr$. Take any $x= (x_1,\ldots,x_d)\in \rr^d$. Let 
\[
a_{i,1}\cdots a_{i,k_i} . b_{i,1}b_{i,2}\cdots
\]
be the binary expansion of $|x_i|$. Filling in extra $0$'s at the beginning if necessary, let us assume that $k_1=\cdots=k_d = k$. Then, let us `interlace' the digits to get the number
\[
a_{1,1}a_{2,1}\cdots a_{d,1} a_{1,2}a_{2,2}\cdots a_{d,2} \cdots a_{1,k}a_{2,k}\cdots a_{d,k}.b_{1,1}b_{2,1}\cdots b_{d,1}b_{1,2}b_{2,2}\cdots b_{d,2}\cdots.
\]
This is an encoding of the $n$-tuple $(|x_1|,\ldots,|x_d|)$. But we also want to encode the signs of the $x_i$'s. Let $c_i = 1$ if $x_i\ge 0$ and $0$ if $x_i < 0$. Sticking $1c_1c_2\cdots c_d$ in front of the above list, we get an encoding of the vector $x$ as a real number. (The $1$ in front ensures there is no ambiguity arising from some of the leading $c_i$'s being $0$.) It is easy to verify that this mapping is measurable, injective, and its inverse (defined on its range) is also measurable.

Numerical simulations with $\xi_n$ computed using the above scheme produced satisfactory results. Some examples are as follows. The examples show one potential problem with using statistics such as $\xi_n$ in a high dimensional setting: The bias may be quite large (even though the standard deviation is small), resulting in slow convergence of $\xi_n(X,Y)$ to $\xi(X,Y)$. 
\begin{ex}[Points on a sphere]\label{exsphere}
Non-uniform random points were generated on the unit sphere in $\rr^3$ by drawing $\phi_1,\ldots,\phi_n$ uniformly from $[-\pi, \pi]$, drawing $\theta_1,\ldots, \theta_n$ uniformly from $[0,2\pi]$, and defining
\[
x_i = \sin \phi_i \cos\theta_i, \ \  y_i = \sin \phi_i \sin \theta_i, \ \ z_i = \cos\phi_i.
\]
Taking $X_i = (\phi_i, \theta_i)$ and $Y_i = (x_i, y_i, z_i)$, $\xi_n(X,Y)$ was computed for $n=100$ and $n=1000$. One thousand simulations were done for each $n$. The histograms of the values of $\xi_n$ are displayed in Figure \ref{sphere1}. For $n=100$, $\xi_n(X,Y)$ had a mean value of $0.617$ with a standard deviation of $0.049$. For $n=1000$, the mean value was $0.865$ and the standard deviation was $0.009$. Simulations were also done for $n=10000$, where the mean value of $\xi_n(X,Y)$ turned out to be $0.957$ and the standard deviation was $0.002$. The slow convergence due to large bias is clearly apparent in this example.
\begin{figure}[t]
\begin{center}
\begin{subfigure}{.48\textwidth}
\includegraphics[width = \textwidth]{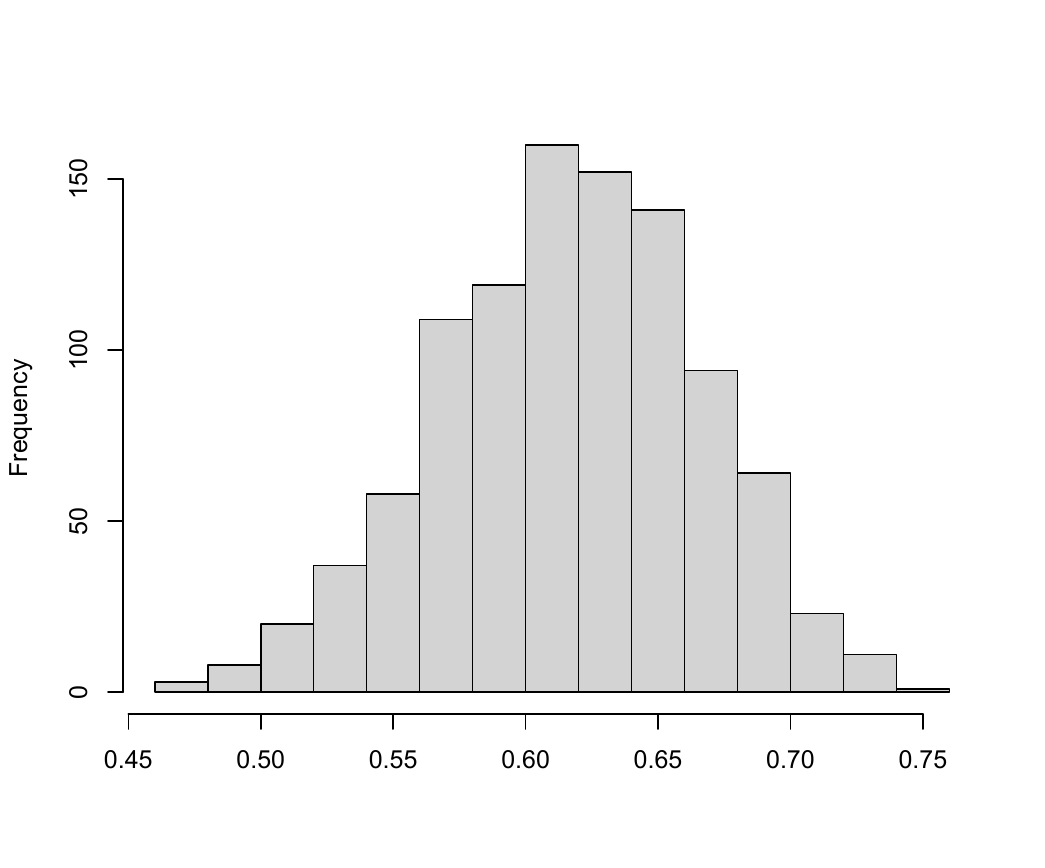}
\caption{$n=100$}
\end{subfigure}
\begin{subfigure}{.48\textwidth}
\includegraphics[width = \textwidth]{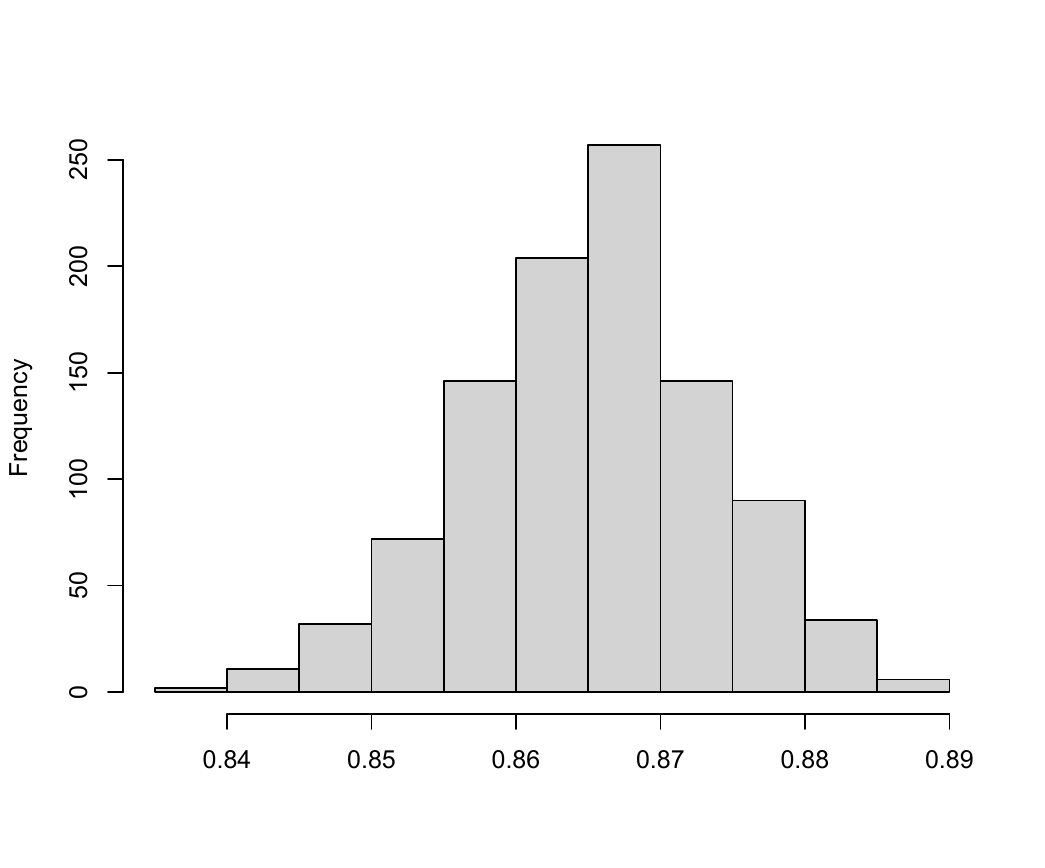}
\caption{$n=1000$}
\end{subfigure}
\caption{Histograms of values of $\xi_n(X,Y)$ when $X$ and $Y$ are the polar and Cartesian coordinates of random points from the unit sphere in $\rr^3$ (Example \ref{exsphere}).\label{sphere1}}
\end{center}
\end{figure}
\end{ex}

\begin{ex}[Points on a sphere plus noise]\label{exsphere2}
This is the same example as the previous one, except that independent random noises were added to $x_i$, $y_i$ and $z_i$. The noise variables were taken to be normal with mean zero and standard deviation $\sigma$, for various values of $\sigma$. Table \ref{sphere2} displays the means and standard deviations of $\xi_n(X,Y)$ in $1000$ simulations, for $n=100$ and $n=1000$, and $\sigma=0.01$, $\sigma=0.05$ and $\sigma = 0.1$.  
\begin{table}[t]
\begin{center}
\caption{Means and standard deviations of $\xi_n(X,Y)$ in $1000$ simulations, when $X=$ polar coordinates of a random point on a sphere and $Y = $ Cartesian coordinates of the point plus independent $N(0,\sigma^2)$ errors in each coordinate (Example \ref{exsphere2}). \label{sphere2}}
\begin{tabular}{rrrr}
\toprule
&\multicolumn{3}{c}{$\sigma$}\\
\cmidrule{2-4}
$n$ & $0.01$ & $0.05$ & $0.1$ \\
\midrule
100 & $0.545$ ($0.062$) & $0.440$ ($0.071$) & $0.357$ ($0.074$)\\
1000 & $0.783$ ($0.015$) & $0.658$ ($0.020$) & $0.543$ ($0.023$)\\
\bottomrule
\end{tabular}
\end{center}
\end{table}
\end{ex}

\begin{ex}[Marginal independence versus joint dependence]\label{exjoint}
In this example, we have a pair of random variables $Y= (a, b)$ which is a function of a $4$-tuple $X = (u,v,w,z)$, but $u$, $v$, $w$ and $z$ are marginally independent of $Y_i$. The variables are constructed as follows. Let $a$, $b$ and $c$ be independent Uniform$[0, 1]$ random variables. Let 
\begin{align*}
&u = a + b + c \ \ (\mathrm{mod} \ 1), \ \ \ v = \frac{a}{2} + \frac{b}{2}  + c \ \ (\mathrm{mod} \ 1),\\
&w = \frac{4a}{3} + \frac{2b}{3}  + c \ \ (\mathrm{mod} \ 1), \ \ \ z = \frac{2a}{3} + \frac{b}{3}  + c \ \ (\mathrm{mod} \ 1).
\end{align*}
Here $x \ (\mathrm{mod} \ 1)$ denotes the `remainder modulo $1$' of a real number $x$. It is easy to see that individually, $u$, $v$, $w$ and $z$ are independent of $Y = (a,b)$, since the operation of adding the uniform random variable $c$ and taking the remainder modulo $1$ erases all information about the quantity to which $c$ is added. However, $Y$ can be recovered from the $4$-tuple $X = (u,v,w,z)$ because $u-v = (a+b)/2 \ (\mathrm{mod} \ 1) = (a+b)/2$ and $w-z = (2a + b)/3 \ (\mathrm{mod} \ 1) = (2a+b)/3$, where the second identity holds in each case because $(a+b)/2\in [0,1]$ and $(2a+b)/3\in [0,1]$. With the above definitions, $n$ i.i.d.~copies of $(X,Y)$ were sampled, and $\xi_n(X,Y)$ was computed, along with the asymptotic P-value for testing independence. This was repeated one thousand times. The average values of the coefficients and the P-values for $n=100$ and $n=1000$ are reported in Table \ref{joint}. The table shows that even for $n$ as small as $100$, the hypothesis that $X$ and $Y$ are independent is rejected with average P-value $0.001$, whereas the average P-value for the hypothesis and $u$ and $Y$ are independent is $0.491$. On the other hand, even for $n$ as large as $1000$, the average P-value for the hypothesis that $u$ and $Y$ are independent is $0.495$, whereas the average P-value for the hypothesis that $X$ and $Y$ are independent is $0.000$. 
\begin{table}[t]
\begin{center}
\caption{Average values of $\xi_n(u,Y)$ and $\xi_n(X,Y)$ and average P-values for testing independence in Example \ref{exjoint}.\label{joint}}
\begin{tabular}{rrrrrr}
\toprule
&\multicolumn{2}{c}{Average value}& & \multicolumn{2}{c}{Average P-value}\\
\cmidrule{2-3}\cmidrule{5-6}
$n$ & $\xi_n(u,Y)$ & $\xi_n(X,Y)$ & & $H_0: u \perp\!\!\!\perp Y$ & $H_0: X \perp\!\!\!\perp Y$  \\
\midrule
$100$ & $0.002$ & $0.313$ & & $0.491$ & $0.001$\\
$1000$ & $0.001$ & $0.581$ & & $0.495$ & $0.000$\\
\bottomrule
\end{tabular}
\end{center}
\end{table}
\end{ex}

The above examples indicate that it may not be `crazy' to use the generalized version of $\xi_n$ as a measure of association in the multivariate setting and beyond. Further investigations, through applications in simulated and real data, would be necessary to arrive at a concrete verdict. 

The generalized version of $\xi_n$ can also be used to define a coefficient of conditional dependence for random variables taking values in standard Borel spaces, as follows. Let $\mx$, $\my$ and $\mz$ be standard Borel spaces, and $X$, $Y$ and $Z$ be random variables taking values in $\mx$, $\my$ and $\mz$, respectively. Let $W$ denote the $\mx \times \mz$-valued random variable $(X,Z)$. Using Borel isomorphisms, let us define the coefficients $\xi_n(X,Y)$ and $\xi_n(W,Y)$ as before, based on i.i.d.~data $(X_1,Y_1,Z_1),\ldots,(X_n, Y_n,Z_n)$. One can then define a coefficient of conditional dependence between $Y$ and $Z$ given $X$ as 
\[
\xi_n(Z,Y|X) := \frac{\xi_n(W,Y) - \xi_n(X,Y)}{1- \xi_n(X,Y)},
\]
leaving it undefined if the denominator is zero. The following theorem justifies its use as a measure of conditional dependence.
\begin{thm}
Suppose that $Y$ is not almost surely equal to a measurable function of $X$. Then as $n\to \infty$, $\xi_n(Z,Y|X)$ converges almost surely to a deterministic limit $\xi(Z,Y|X)\in [0,1]$, which is $0$ if and only if $Y$ and $Z$ are independent given $X$, and $1$ if and only if $Y$ is almost surely equal to a measurable function of $Z$ given $X$.
\end{thm}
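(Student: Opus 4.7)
The plan is to reduce the theorem to Theorem \ref{popthm} of \cite{azadkiachatterjee21} by recognizing the limit $\xi(Z,Y|X)$ as the conditional-dependence coefficient $T(Y',Z'|X')$ of equation \eqref{tdef0}, where $X'$, $Y'$, $Z'$ denote the real-valued images of $X$, $Y$, $Z$ under their chosen Borel isomorphisms and $W := (X,Z)$. First, applying Theorem \ref{mainthm2} to the pairs $(X,Y)$ and $(W,Y)$ --- with $W$ viewed as an $\mx\times\mz$-valued random variable, which is again standard Borel --- gives $\xi_n(X,Y)\to\xi(X,Y)$ and $\xi_n(W,Y)\to\xi(W,Y)$ almost surely. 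Since $Y$ is not almost surely a measurable function of $X$, the same theorem forces $\xi(X,Y)<1$, so $1-\xi_n(X,Y)$ is eventually bounded away from zero almost surely, and the continuous mapping theorem yields $\xi_n(Z,Y|X)\to\xi(Z,Y|X):=(\xi(W,Y)-\xi(X,Y))/(1-\xi(X,Y))$ almost surely.

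The key step will be an algebraic identity obtained via the law of total variance. Writing $\xi(X,Y)=\xi(X',Y')$ and $\xi(W,Y)=\xi(W',Y')$ in the form \eqref{xidef}, and applying $\var(U)=\E(\var(U|X'))+\var(\E(U|X'))$ first to $U_t:=\pp(Y'\geq t|X',Z')$ (whose conditional expectation given $X'$ equals $\pp(Y'\geq t|X')$ by the tower property) and then to $V_t:=1_{\{Y'\geq t\}}$, and integrating against $\mu:=\mathrm{Law}(Y')$, I expect to derive
\[
\xi(W,Y)-\xi(X,Y)=\frac{\int \E(\var(\pp(Y'\geq t|X',Z')|X'))\,d\mu(t)}{\int \var(1_{\{Y'\geq t\}})\,d\mu(t)}
\]
and
\[
1-\xi(X,Y)=\frac{\int \E(\var(1_{\{Y'\geq t\}}|X'))\,d\mu(t)}{\int \var(1_{\{Y'\geq t\}})\,d\mu(t)}.
\]
The common denominator cancels in the ratio, yielding the clean identification $\xi(Z,Y|X)=T(Y',Z'|X')$ in the notation of \eqref{tdef0}.

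Once this identity is established, Theorem \ref{popthm} applied to the triple $(Y',X',Z')$ of real-valued random variables immediately gives $T(Y',Z'|X')\in[0,1]$, with value zero if and only if $Y'$ and $Z'$ are conditionally independent given $X'$, and value one if and only if $Y'$ is almost surely a measurable function of $(X',Z')$. Because Borel isomorphisms preserve $\sigma$-algebras (so $\sigma(X')=\sigma(X)$, and similarly for $Y'$ and $Z'$), the hypothesis of Theorem \ref{popthm} and both of its conclusions transfer between the primed and unprimed triples without further work. The main obstacle will be the law-of-total-variance identification above: it is routine but requires carefully matching the numerators and denominators of three different coefficients. Once it is in place, the remainder of the argument is immediate.
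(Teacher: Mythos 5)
Your proposal is correct and follows essentially the same route as the paper's own proof: reduce to the real-valued images via the Borel isomorphisms, apply the law of total variance (using the tower property $\E(\pp(Y'\ge t\mid X',Z')\mid X')=\pp(Y'\ge t\mid X')$) to identify both $\xi(W',Y')-\xi(X',Y')$ and $1-\xi(X',Y')$ with the numerator and denominator of $T$ in \eqref{tdef0}, and then invoke Theorem \ref{popthm}. The only cosmetic differences are that you cite Theorem \ref{mainthm2} rather than Theorem \ref{mainthm} for the convergence step and that you make explicit the $\sigma$-algebra preservation ($\sigma(W')=\sigma(X',Z')$, $\sigma(X')=\sigma(X)$, etc.) needed to pass between the primed and unprimed triples --- a point the paper leaves implicit.
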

\begin{proof}
Let $\varphi$, $\psi$, $\eta$ and $\delta$ be the Borel isomorphisms of $\mx$, $\my$, $\mz$ and $\mx \times \mz$ that we are using in our definitions of $\xi_n$. Let $X' := \varphi(X)$, $Y' := \psi(Y)$, $Z' := \eta(Z)$ and $W' := \delta(W)$. By Theorem \ref{mainthm} and the assumption  that $Y$ is not almost surely a measurable function of $X$, we get that $\xi_n(Z,Y|X)$ converges almost surely to
\begin{align*}
\xi(Z,Y|X) &:= \frac{\xi(W',Y') - \xi(X',Y')}{1-\xi(X',Y')}. 
\end{align*}
Let $\mu$ denote  the law of $Y'$. Then 
\begin{align*}
\xi(W',Y') - \xi(X',Y') &= \frac{\int \var(\pp(Y'\ge t |W')) d\mu(t) - \int \var(\pp(Y'\ge t |X')) d\mu(t)}{\int \var(1_{\{Y'\ge t\}}) d\mu(t)} \\
&=  \frac{\int \var(\pp(Y'\ge t|X', Z')) d\mu(t) - \int \var(\pp(Y'\ge t|X')) d\mu(t)}{\int \var(1_{\{Y'\ge t\}}) d\mu(t)} \\
&= \frac{\int \ee(\var(\pp(Y'\ge t |X', Z')|X') d\mu(t)}{\int \var(1_{\{Y'\ge t\}}) d\mu(t)}.
\end{align*}
Similarly,
\[
1-\xi(X',Y') = \frac{\int \ee(\var(1_{\{Y'\ge t \}}|X')) d\mu(t)}{\int \var(1_{\{Y'\ge t\}}) d\mu(t)}.
\]
From the above expressions, we see that $\xi(Z,Y|X)$ is nothing but the quantity $T(Y,Z|X)$ displayed in equation \eqref{tdef0}. The claims of the theorem now follow from Theorem \ref{popthm}.
\end{proof}

\section{R packages}
An R package for calculating $\xi_n$, as well as the generalized coefficient proposed in this survey, and P-values for testing independence --- named XICOR --- is available on CRAN~\cite{chatterjeeholmes20}.  An R package for calculating the conditional dependence coefficient $T_n$ and implementing the FOCI algorithm, called FOCI, is also available~\cite{azadkiaetal20}. The KFOCI algorithm of \citet{huangetal20} is implemented in an R package by the same name~\cite{huangetal22}.




\bibliographystyle{abbrvnat}

\bibliography{myrefs}

\end{document}